%% file: main.tex
\documentclass{article}
\usepackage{graphicx} 

\title{A Short Note on Algorithms For The Noisy $k$-XOR Problem}
\author{Kiril Bangachev\\
\small{MIT Department of EECS}\footnote{Funded by an NAE Grand Challenge Vest Fellowship and a Schwarzman College of Computing Fellowship for the duration of this work.}}
\date{October 2026}
\input{macros}

\usepackage{float}
\usepackage{subcaption}
\usetikzlibrary{arrows.meta}
\definecolor{algI}{HTML}{8753A1}
\definecolor{algII}{HTML}{B97714}
\definecolor{algIII}{HTML}{3266A8}
\definecolor{algIV}{HTML}{C8443D}
\definecolor{moore}{HTML}{2A8653}

\bibliography{ref}
\begin{document}
\maketitle
\begin{abstract}
    This mostly expository note studies the sample-noise regimes in which the testing problem for noisy $k$-XOR in dimension $n$ can be efficiently solved. Concretely, we ask what pairs $(\beta, \delta)$ admit algorithms running in time $n^{O(k)}$ where $m = n^\beta, \beta \in [0,k]$ is the number of samples and $\mathsf{Bern}(n^{-\delta}),\delta \in (0,+\infty)$ is the noise distribution. The current state-of-the-art algorithms are: I. A folklore algorithm when $\beta > 1, \delta>1$ based on Gaussian elimination; II. A folklore algorithm when $\beta\ge k/2$ based on the birthday paradox; III. A recent algorithm in the regime $\beta > \delta + k(1-\delta), \delta \in (0,1)$ based on a reduction to the first algorithm \cite{dao24sparsedense,chen2024algorithmssparselpnlspn}. IV. Extending \cite{alonfeigeprobes}, we strictly improve the last regime to $\beta > \delta + (k-1)(1-\delta), \delta \in (0,1).$ These algorithms correspond to efficiently finding short cycles in uniform hypergraphs and can be equivalently phrased as algorithmic Moore bounds.
\end{abstract}

\begin{figure}[H]
  \centering
\scalebox{0.7}{%
\begin{tikzpicture}[x=2.1cm,y=4.3cm,font=\normalsize,
I/.style={algI,line width=1.25pt,dash pattern=on 5pt off 2.8pt},
II/.style={algII,line width=1.25pt,dash pattern=on 2pt off 2.2pt},
III/.style={algIII,line width=1.5pt},
IV/.style={algIV,line width=1.5pt},
M/.style={moore,line width=1.4pt,dash pattern=on 6pt off 2.5pt on 1.2pt off 2.5pt}]
\node[font=\normalsize\bfseries] at (3,1.53) {Noise versus samples};
\fill[black!9] (0,0) rectangle (1,1.35);
\node[rotate=90,font=\normalsize,text=black!65] at (.48,.67) {Information-theoretically impossible};
\fill[moore!9] (1,0)--(1,1)--(3,0)--cycle;
\fill[purple!12] (1,1)--(3,.5)--(3,0)--cycle;
\draw[black!15,line width=.4pt] (0,1)--(6.12,1);
\draw[I] (1,1.35)--(1,1)--(6.1,1);
\draw[II] (3,0)--(3,1.35);
\draw[III] (1,1)--(6,0);
\draw[IV] (1,1)--(5,0);
\draw[M] (1,1)--(3,0);
\fill[black!75] (1,1) circle[radius=1.65pt];
\node[text=moore!90!black,align=center,font=\normalsize] at (1.62,.24) {Moore Bound\\ (unknown for $k$-XOR)};
\node[font=\normalsize] at (2.43,.49) {unknown};
\draw[-{Stealth[length=5pt]},line width=.7pt] (0,0)--(6.3,0);
\draw[-{Stealth[length=5pt]},line width=.7pt] (0,0)--(0,1.39);
\foreach \x/\lab in {0/0,1/1,3/{k/2},5/{k-1},6/k} {
 \draw (\x,0)--++(0,-.014);
 \node[anchor=north,yshift=-3pt] at (\x,0) {$\lab$};
}
\foreach \y/\lab in {0/0,1/1} {
 \draw (0,\y)--++(-.045,0);
 \node[anchor=east,xshift=-3pt] at (0,\y) {$\lab$};
}

\node at (3.1,-.17) {Sample exponent $\beta$ \quad ($m=n^\beta$)};
\node[rotate=90,align=center] at (-.76,.70)
  {Noise exponent $\delta$ ($\eta=n^{-\delta}$)\\
   (or, finding cycles of length $\widetilde O(n^\delta)$)};
\draw[I] (0,-.37)--(.31,-.37);
\node[anchor=west,align=left,font=\normalsize] at (.39,-.37) {\textbf{I. Gaussian elimination}\\$\beta\geq 1,\quad \delta\geq 1$};
\draw[II] (3.27,-.37)--(3.58,-.37);
\node[anchor=west,align=left,font=\normalsize] at (3.66,-.37) {\textbf{II. Birthday paradox}\\$\beta=k/2$};
\draw[III] (0,-.61)--(.31,-.61);
\node[anchor=west,align=left,font=\normalsize] at (.39,-.61) {\textbf{III. \cite{dao24sparsedense,chen2024algorithmssparselpnlspn}}\\$\beta=\delta+k(1-\delta)$};
\draw[IV] (3.27,-.61)--(3.58,-.61);
\node[anchor=west,align=left,font=\normalsize] at (3.66,-.61) {\textbf{IV. This note (extending \cite{alonfeigeprobes})}\\$\beta=\delta+(k-1)(1-\delta)$};
\draw[M] (0,-.85)--(.31,-.85);
\node[anchor=west,align=left,font=\normalsize] at (.39,-.85) {\textbf{Moore Bound}\\$\beta=\delta+\tfrac{k}{2}(1-\delta)$};
\node[anchor=west,align=left,font=\normalsize,text=black!60] at (3.66,-.85) {Bounds shown at exponent level.\\Schematic spacing: $k=6$.};
\end{tikzpicture}%
}
  \caption{Sample--noise tradeoffs for noisy $k$-XOR, equivalently interpreted
    through finding cycles of length $\widetilde O(n^\delta)$. Regions strictly above and to the right of the curves are ``easy'' and handled by the respective algorithms. The Moore bound does not imply hardness for noisy $k$-XOR but is an obstruction to the meta-approach described in Lemma~\ref{lemma:attack}, which underlies all four algorithms.}
  \label{fig:kxor-tradeoffs}
\end{figure}
\section{Introduction}

The noisy $k$-XOR problem is informally ``linear regression over $\field$ where each sample is uniformly random $k$-sparse'' (we define it rigorously below). Perhaps surprisingly, unlike standard linear regression over $\mathbb{R},$ the problem exhibits a substantial information-computation gap when $k \ge 3.$ In some ``information-theoretically possible regimes'', there is evidence for near-exponential hardness of the problem \cite{schoenebeck2008linear}. The simple linear structure combined with strong hardness makes the noisy $k$-XOR problem useful across
Learning Theory \cite{raghavendra17stronglyrefuting,feige02refutingkxor,feige06witnesses,Allen2015HowTR,guruswami23semirandom, bogdanov19XORcodes,bogdanov25searchtodecision,blasiok2026expanders}, Cryptography \cite{applebaum2010public,dao24sparsedense,jain2022indistinguishability,ragavan24io}, and Average-Case Complexity \cite{grigoriev2001linear,schoenebeck2008linear}.

Despite the prominence of the problem,  we still don't know the answer to a basic question: \emph{In what regimes is noisy $k$-XOR easy?} The goal of this note is to summarize the (surprisingly limited) progress towards this question. We concretely focus on the regime of small sparsity $k = \tilde{O}(1),$ inverse polynomial noise rate $\eta = n^{-\delta}, \delta\in (0,+\infty)$ and polynomially many equations $m = n^\beta, \beta\in (0,k].$ We also interpret \emph{easy} as the existence of algorithms running in time $n^{O(k)}.$ This is for two reasons. First, a lot of the interesting behavior is when $\beta = \Theta(k)$ (Figure~\ref{fig:kxor-tradeoffs}), so time $n^{O(k)}$ is polynomial in the input size. Second, the recent work \cite{bangachev2024nearoptimaltimesparsitytradeoffssolving} gives evidence for an $n^{\Omega(k)}$ running-time lower bound on the noisy $k$-XOR problem.\footnote{The result of \cite{bangachev2024nearoptimaltimesparsitytradeoffssolving} establishes $n^{\Omega(\frac{k}{\mathsf{poly}(\log \log n, \log k)})}$ hardness and only applies to high-noise regimes.} The problem has testing, search, and refutation versions. We focus on testing and discuss the other two at the end.

\begin{problem}[Testing Problem For Noisy $k$-XOR]
\label{problem:mainproblem}
Suppose that $m,n,k$ are natural numbers with $1\le k \le n,$ $m = n^\beta$ for some $\beta \in (0,+\infty).$ 
Let 
$\sparserow(\field)$ be the uniform distribution over $k$-sparse vectors in $\field^n$.
Suppose also that $\delta\in (0, +\infty)$ and the noise is $\eta = n^{-\delta}.$ The testing problem is between the following two distributions over $(A,y)\in \field^{m \times n} \times \field^{m}$
:
\begin{equation}
    \begin{split}
        & \mathsf{H}_0 \; : \; A\sim \sparserow(\field)^{\otimes m}\perp y \sim \unif(\field^{m}),\\
        & \mathsf{H}_1 \; : \; A\sim \sparserow(\field)^{\otimes m}\perp s \sim \unif(\field^{n})\perp  e\sim \mathsf{Bern}(\eta)^{\otimes m}= \mathsf{Bern}(n^{-\delta})^{\otimes m}\;,\text{ and } y = As + e.
    \end{split}
\end{equation}
The goal is to design an algorithm $\mathcal{A}: \field^{m \times n} \times \field^{m}\longrightarrow\{0,1\}$ running in time $n^{O(k)}$ and distinguishing the two distributions with constant advantage, i.e.
$$
\Big| \prob_{(A,y)\sim \mathsf{H}_0}[\mathcal{A}(A,y) = 0] - \prob_{(A,y)\sim \mathsf{H}_1}[\mathcal{A}(A,y) =0]\Big|\ge 1/100.
$$
\end{problem}
The constant $1/100$ is unimportant and can be boosted to any other constant by taking constantly many more samples, or to $1 - 1/\mathsf{poly}(n)$ by taking $O(\log n)$ times more samples. Since we care only about the exponent $\beta,$ this change of the advantage is inconsequential.

We mention, without proof, that whenever $m \ge 2n,$ there is a consistent test between $\mathsf{H}_0$ and $\mathsf{H}_1$ based on the largest number of simultaneously satisfiable equations. This note is dedicated to exploring when a ``computationally efficient'' test with this property exists.

\paragraph{The Attack: Even Cycles and An Algorithmic Moore Bound.} A natural approach to Problem~\ref{problem:mainproblem} is based on finding linearly dependent rows of $A$ (also known as even covers or cycles). This meta-approach underlies many attacks on noisy $k$-XOR and related problems including the Kikuchi matrix approach \cite{Allen2015HowTR,ghd25classicallspeeedup,agrawal2026attacks} and the BKW algorithm \cite{BKW,lyubashesky05lpn,kothari2026strongly} in addition to the simple algorithms discussed below.

\begin{lemma}
\label{lemma:attack}
Suppose that there is an algorithm running in
time $T$ which on input $A\in \field^{ m\times n}$ 
finds with probability at least $1/10$ a set of
$1\le L \le n^\delta$ rows of $A,$ indexed by $I \subseteq [m]$ that sum up to 
$\mathbf{0}_n \in \field^n.$ We claim that for any constant $\epsilon>0, $ there is an $O(T + \mathsf{poly}(n))$ testing algorithm for the noisy $k$-XOR problem with $m = n^\beta, \eta = n^{-\delta-\epsilon}.$\end{lemma}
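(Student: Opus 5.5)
The test is the obvious one: run the cycle-finding algorithm on $A$, and if it returns a set $I$ with $\sum_{i\in I} A_i = \mathbf{0}_n$, compute the parity $\sum_{i\in I} y_i$. Output $1$ (``planted'') if the parity is $0$, and output $0$ otherwise, including when the algorithm fails. Checking that $I$ really sums to zero and computing the parity costs $O(nL+ L)=\mathsf{poly}(n)$, so the total running time is $O(T+\mathsf{poly}(n))$. The cycle-finding algorithm looks only at $A$, and $A$ has the same law $\sparserow(\field)^{\otimes m}$ under both hypotheses. Hence it succeeds with the same probability $p\ge 1/10$ under $\mathsf{H}_0$ and $\mathsf{H}_1$, and the returned $I$ depends only on $A$ and the algorithm's internal randomness.

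The next step is to compute the parity distribution under each hypothesis, conditioned on success and on $A$.

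Under $\mathsf{H}_0$, $y$ is uniform and independent of $A$, hence independent of $I$. Since $|I|\ge 1$, the sum $\sum_{i\in I}y_i$ is a uniform bit. The test therefore outputs $1$ with probability exactly $p/2$.

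Under $\mathsf{H}_1$, we have $\sum_{i\in I}y_i=\big(\sum_{i\in I}A_i\big)s+\sum_{i\in I}e_i=\sum_{i\in I}e_i$. The noise $e$ is independent of $(A,I)$ and $|I|=L\le n^{\delta}$. By a union bound,
$$\Pr\Big[\sum_{i\in I}e_i=1\Big]\le \Pr[\exists i\in I: e_i=1]\le L\eta\le n^{\delta}\cdot n^{-\delta-\epsilon}=n^{-\epsilon}.$$
(One can also use the exact formula $\tfrac{1-(1-2\eta)^L}{2}\le L\eta$.) So under $\mathsf{H}_1$ the test outputs $1$ with probability at least $p(1-n^{-\epsilon})$.

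Subtracting, the advantage is at least $p(1/2-n^{-\epsilon})\ge \tfrac1{10}(1/2-n^{-\epsilon})$. This exceeds $1/100$ once $n^{-\epsilon}\le 2/5$, which holds for all sufficiently large $n$ because $\epsilon>0$ is constant.

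The one step that needs care is the independence claim. We must make sure $I$ depends only on $A$, and not on $y$, so that $y$ (under $\mathsf{H}_0$) and $e$ (under $\mathsf{H}_1$) are independent of $I$. This holds because the algorithm is fed only $A$. Since $I$ is independent of $e$, the union bound over $I$ needs no correction for adaptivity. The $\epsilon$ slack in the noise is exactly what pushes $L\eta$ below a constant.
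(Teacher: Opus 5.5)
Your proposal is correct and follows the same approach as the paper. You run the cycle finder on $A$ alone and compare the parity $1_I^T y$, which is uniform under $\mathsf{H}_0$ and equals $1_I^T e$, a $\mathsf{Bern}(O(n^{-\epsilon}))$ bit, under $\mathsf{H}_1$. Your union bound, the independence of $I$ from $y$ and $e$, and the advantage computation $p(1/2-n^{-\epsilon})\ge 1/100$ fill in details that the paper's proof leaves implicit.
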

\begin{proof}
Let $I$ be the set of coordinates returned by the algorithm. Then, under $\mathsf{H}_0,$ we have that $1_I^Ty\sim \mathsf{Bern}(1/2)$ as $y$ is independent of $A$ and has unbiased coordinates. Under $\mathsf{H}_1,$ on the other hand, 
$1_I^Ty = 1_I^T(As + e) = 1_I^Te.$ This is the XOR of at most $ n^\delta$ independent $\mathsf{Bern}(n^{-\delta-\epsilon})$ variables, which is $\mathsf{Bern}(O(n^{-\epsilon})).$ This distribution can be easily distinguished from $\mathsf{Bern}(1/2)$ with advantage $1/10.$
\end{proof}
All the algorithms below and in Figure~\ref{fig:kxor-tradeoffs} are based on Lemma~\ref{lemma:attack}.

We note that the shortest cycle given $m = n^\beta$ equations is of size roughly $\max (2, \tilde{O}(n^{\frac{k-2\beta}{k-2}})).$ Thus, a shortest-cycle oracle would succeed in the regime $\beta\ge \delta + k/2(1 - \delta).$ This bound is known as the Moore bound and is known to hold also for worst-case $A$ (related to the semi-random version of the problem, see Section~\ref{sec:semirandom}) \cite{hsieh2023simple,bandeira2026hypergraphmoorebound,feige2008small}. When such cycles can be found in time $n^{O(k)}$ is a wide-open problem for both random and worst-case $A$. Even more mysterious is the regime when the noise is $n^{-\delta}$ but cycles of length $\tilde{O}(n^\delta)$ do not exist, so one cannot rely on the approach of Lemma~\ref{lemma:attack} even with a cycle-finding oracle.\footnote{The  BKW algorithm \cite{BKW} for Learning Parity with Noise operates in a regime where cycles of length $1/\eta$ do not exist. 
Instead, it finds (nearly exponentially) many longer cycles. Such an approach seems more difficult towards our problem since we only allow time $n^{O(k)}.$ The number of cycles needed scales exponentially with cycle length beyond length $O(1/\eta),$ see \cite{BKW}.}

We outline as an interesting problem whether the converse to Lemma~\ref{lemma:attack} holds. 

\begin{openproblem}
    Suppose that there exists an algorithm which solves the testing Problem~\ref{problem:mainproblem} for some $(\beta, \delta)$ in time $T = T_{n,k,\beta, \delta}.$ Does there exist an algorithm running in time $\mathsf{poly}(T, n^k)$ which on input $A\sim \sparserow(\field)^{\otimes m}$ with $m = n^\beta$ finds a non-trivial cycle of length $n^{\delta + o(1)}$?
\end{openproblem}

\section{Algorithms}
We now present the four algorithms from Figure~\ref{fig:kxor-tradeoffs} which are all based on Lemma~\ref{lemma:attack}.

\paragraph{I. Low-Noise Algorithm [Folklore].} When $m \ge n + 1,$ the first $n+1$ rows of $A$ are linearly dependent. Hence, there is a cycle of length $\le n+1.$ It can be efficiently discovered in time $n^{O(1)}$ via Gaussian elimination. Combined with Lemma~\ref{lemma:attack}, this shows that the regime $\beta \ge 1, \delta \ge 1$ is easy.

\paragraph{II. Birthday Paradox Algorithm [Folklore].} Another folklore algorithm is based on the birthday paradox. Suppose that $\beta \ge k/2.$ As the number of distinct rows is $n^k$, with a constant positive probability, there exist two rows among the first $n^{k/2}$ rows of $A$ that coincide. Those two rows form a cycle. Combined with Lemma~\ref{lemma:attack}, this shows that the regime $\beta \ge k/2, \delta > 0$ is easy. 

\paragraph{III. Algorithm of \cite{dao24sparsedense,chen2024algorithmssparselpnlspn}.} In these two recent works, the authors adapt the folklore algorithm resolving $\beta \ge 1, \delta \ge 1$ to the regime $\beta > k - (k-1)\delta, \delta \in (0,1).$ The algorithm is the following. Consider only the first $n^\delta$ columns of $A.$ Then, maintain only the rows of $A$ whose support is in those $n^\delta$ columns. A simple first and second moment calculation shows that the expected number of samples retained is $\ge n^\delta + 1.$ Restricting to those $n^\delta + 1$ samples and $n^\delta$ columns we can apply the first algorithm. 

For completeness, we give the first moment calculation. Independence across samples then implies the result. The probability that the support of a sample falls fully in the first $n^\delta$ coordinates is $\frac{\binom{n^\delta}{k}}{\binom{n}{k}} = \Theta\big(\frac{n^\delta}{n}\big)^k.$ Taking the expectation over the $m$ samples, we get an expected number of samples with support in the first $n^\delta$ columns 
\begin{equation}
    \begin{split}
        m \times \Theta\big(\frac{n^\delta}{n}\big)^k = 
        n^{\beta + k(\delta-1) +o(1)}.
    \end{split}
\end{equation}
Hence, whenever $\beta >\delta + k(1-\delta),$ there are at least $n^\delta +1$ samples with support in the first $n^\delta$ columns as desired. 

\paragraph{IV. Improved Algorithm Based on \cite{alonfeigeprobes}.} Finally, we present our algorithm which improves the regime of \cite{dao24sparsedense,chen2024algorithmssparselpnlspn} to $\beta > \delta + (k-1)(1-\delta), \delta \in (0,1).$ It is a relatively straightforward extension of an algorithm for $k=3$ in \cite[Lemma 3.3]{alonfeigeprobes}.

We still separate the first $n^\delta$ columns as in algorithm \textbf{III}. However, this time we keep samples with exactly two coordinates outside of those $n^\delta$ columns (as opposed to 0). This is useful since those samples with exactly 2 coordinates on the remaining $n-n^\delta$ columns can be viewed as edges in a multigraph indexed by the $n-n^\delta$ columns. In graphs, unlike in hypergraphs, cycles are easy to discover. 

A similar first moment + independence argument shows that whenever $\beta > \delta + (k-1)(1-\delta)+\epsilon,$ with high probability there are at least $n^{1 + \epsilon}$ samples which have exactly $2$ coordinates outside of the first $n^\delta$ columns. 

We now view those samples as a graph on $n - n^\delta$ vertices and $n^{1 + \epsilon}$ edges. The following fact is well known and we only sketch its proof for completeness.

\begin{fact}
Suppose that $G = (V,E)$ is a multigraph on $p$ vertices and $p^{1 + \epsilon}$ edges for some absolute constant $\epsilon>0.$ Then, for all large enough $p,$ there exists a collection of $p$ edge-disjoint cycles each of length $O(1/\epsilon).$
\end{fact}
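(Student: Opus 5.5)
We prove the statement by extracting short cycles one at a time, using the standard Moore-type bound for graphs together with a greedy deletion argument. The key intermediate claim is the following. \emph{Every multigraph on $p$ vertices with at least $p^{1+\epsilon}/2$ edges contains a cycle of length at most $\ell := \lceil 2/\epsilon\rceil + 2$.} Parallel edges count as cycles of length $2$; if a self-loop arises we may treat it as a cycle of length $1$ or discard it. Given the claim, we repeatedly find such a cycle, record it, and delete its edges. Each removal deletes at most $\ell = O(1/\epsilon)$ edges. After $p$ removals we have deleted at most $p\ell$ edges. For $p$ large enough that $p\ell \le p^{1+\epsilon}/2$, at least $p^{1+\epsilon}/2$ edges remain at every stage, so the claim applies each time. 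The recorded cycles are edge-disjoint by construction, so we obtain $p$ edge-disjoint cycles, each of length $O(1/\epsilon)$.

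\textbf{Step 1: passing to high minimum degree.} To prove the claim, we first reduce to a subgraph of large minimum degree. The average degree is $d := 2|E|/p \ge p^{\epsilon}$. We repeatedly delete any vertex of degree less than $d/2$. Each deletion removes fewer than $d/2$ edges, so in total fewer than $p\cdot d/2 = |E|$ edges are removed. Hence a nonempty subgraph $H$ survives, and every vertex of $H$ has degree at least $d/2 \ge p^{\epsilon}/2$. If $H$ has a repeated edge, we already have a cycle of length $2$, so we may assume $H$ is simple.

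\textbf{Step 2: BFS from a vertex of $H$.} We now run breadth-first search from any vertex $v$ of $H$, up to depth $r := \lfloor (\ell-1)/2\rfloor$. Suppose the ball of radius $r$ contained no cycle. Then it would be a tree in which every internal vertex has at least $p^{\epsilon}/2 - 1$ children. So the ball would contain at least $(p^{\epsilon}/2-1)^{r}$ vertices. Since $r\epsilon > 1$ by the choice of $\ell$, this exceeds $p$ for large $p$, which is a contradiction. Hence some non-tree edge appears within distance $r$ of $v$. That edge together with the two tree paths back to their common ancestor forms a cycle of length at most $2r+1 \le \ell$.

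\textbf{Main obstacle.} The only delicate point is the bookkeeping of constants. We need $\ell$ chosen so that $(p^\epsilon/2-1)^{r} > p$, and we need $p$ large enough that deleting $p\ell$ edges keeps the edge count above the threshold $p^{1+\epsilon}/2$. Both hold with $\ell = O(1/\epsilon)$ for all sufficiently large $p$. The argument is also algorithmic, since BFS and greedy deletion run in polynomial time. This matters for the application, where each short cycle in the graph, combined with the columns inside the first $n^\delta$ coordinates, will be used in algorithm IV.
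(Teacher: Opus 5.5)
This is essentially the paper's argument: prune vertices of degree below $p^{\epsilon}/2$, find a short cycle by BFS, delete its edges and repeat. You are also more careful than the paper's sketch in checking that the edge count stays above $p^{1+\epsilon}/2$ throughout all $p$ extractions.

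There is one slip in the constants. With $\ell=\lceil 2/\epsilon\rceil+2$ and $r=\lfloor(\ell-1)/2\rfloor$ you only get $r\epsilon\ge 1$, with equality whenever $1/\epsilon$ is an integer. For example, $\epsilon=1/2$ gives $r=2$ and $(p^{1/2}/2-1)^2<p$, so the counting yields no contradiction. Taking $r=\lfloor 1/\epsilon\rfloor+1$ and $\ell=2r+1$ repairs this and keeps $\ell=O(1/\epsilon)$.
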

\begin{proof}[Proof Sketch] It is enough to show the existence of one such cycle. Then, we can remove its edges and induct. To show the existence of one such cycle, notice first that $G$ has a subgraph $G' = (V', E')$ with minimal degree at least $p^\epsilon/2.$ To find one such $V',$ keep deleting vertices of degree less than $p^\epsilon/2.$ Note that in the process at most $p \times p^\epsilon/2 < |E|$ edges are deleted. Hence, there are some edges which are not deleted and the process terminates with a non-empty graph. 

This graph has at most $p$ vertices and minimal degree $p^\epsilon/2.$ Hence, a simple breadth-first search procedure identifies a cycle of length at most $O(\log_{p^\epsilon/2}p) = O(1/\epsilon).$
\end{proof}

The proof above not only shows that there are $n$
edge-disjoint cycles on the last $n - n^\delta$ columns, but it also provides an $n^{O(1)}$ algorithm for finding them. 

Let the cycles correspond to equations indexed by disjoint sets respectively $I_1, I_2, \ldots, I_n,$ each of length $O(1/\epsilon).$ Then, the row-vectors $1_{I_1}^TA,1_{I_2}^TA, \ldots 1_{I_n}^TA $ fully belong to the first $n^\delta$ 
columns. Again, Gaussian elimination will give a non-trivial cycles among at most $n^\delta$ of them. This leads to a non-trivial cycle of length $O(n^\delta/\epsilon)$ 
of the original equations which is enough to apply Lemma~\ref{lemma:attack}.

\section{Some Remarks}
\paragraph{Search and Strong Refutation.}
This note focused on the testing problem. However, search and refutation are also of interest to the high-dimensional statistics,  
learning theory, and 
cryptography communities. 

We note that up to a blow-up of the sample size by a multiplicative factor of $n$ (or less) each of these algorithms can be easily transformed into a search algorithm (that is, finding $s$ under the promise that the distribution is $\mathsf{H}_1.$ For even $k$ this is only possible up to a global flip of the coordinates of $s$). This can be either seen directly for the simple algorithms we presented or via the recent search (in noisy $k$-XOR)-to-decision (in noisy $(k-1)$-XOR) reduction \cite{bogdanov25searchtodecision}. Prior reductions with worse sample blow-up are also known \cite{bogdanov19XORcodes}.

While there doesn't seem to be a qualitative difference between search and testing, strong refutation is significantly more challenging (see \cite{raghavendra17stronglyrefuting} for the definition of refutation). In fact, Sum-of-Squares (SoS) lower bounds suggest that when $\beta\le k/2,$ there are no $n^{O(k)}$ refutation algorithms. 

It is worthwhile noting that the recent refutation algorithm \cite{kothari2026strongly} for learning parity with noise, extending the BKW algorithm, surpasses known SoS lower bounds. 
\label{sec:semirandom}

\paragraph{Other Parameter Regimes.} Our focus is on the setting of $n^{O(k)}$ time budget, $\beta \le k/2, \eta = n^{-\Omega(1)}.$ However, of interest are many other regimes, such as when quasi-polynomial or even subexponential time algorithms are permissible  
\cite{raghavendra17stronglyrefuting,blasiok2026expanders,mao2026near}, or the noise is of the form $\mathsf{Bern}(1/2 - n^{-\kappa})$ and the number of equations is $m = n^\beta$ for $\beta \in [k/2, k)$ 
\cite{bresler2026average} (see Section 2.1 of \cite{bresler2026average} for a similar discussion of hardness and polynomial time algorithms in that regime). 

\paragraph{Semirandom Instances.} The current note focuses on the case when $A\sim \sparserow(\field)^{\otimes m}.$ Common in the literature is also the semi-random case in which $A$ is worst-case, but independent of the secret and noise \cite{guruswami23semirandom,abascal2021strongly}. We note that most of the discussion in the current note extends rather directly to this case as well. Neither of the algorithms relies strongly on the fact that $A$ is random. Namely, algorithm I applies directly to any $A$ (does not even need to be sparse). Algorithm II still applies as in any $k$-uniform hypergraph with $n^{k/2+\epsilon}$ edges, there is a cycle of length $O(1/\epsilon)$ which can be found in time $n^{O(k)}$ \cite{feige2008small}. Algorithms III and IV need to be modified as follows. Note that if instead of the first $n^\delta$ columns one takes a uniformly random set of $n^\delta$ columns, the expected number of samples with 0 (respectively 2) coordinates outside of those columns is the same $mp$ with $p \approx n^{-(1-\delta)k}$ (respectively $p \approx n^{-(1-\delta)(k-2)}$). However, as samples are not independent, we cannot rely on a second moment concentration argument. Instead we notice that the maximal number of samples with the desired property is $m.$ Hence, the probability that there are at least  $mp/2$ samples with 0 (respectively 2) coordinates outside of those columns is at least  $\frac{mp/2}{m-mp/2} = \Theta(p) = n^{-\Theta(k)}.$ After $n/p$ trials we will find with high probability $mp/2$ rows that satisfy the desired property.

\paragraph{Hardness in Restricted Models of Computation.} One reason why the complexity of the problem is still poorly understood is that we do not have any reasonable hardness predictions for the testing/search problem against restricted classes of algorithms such as via the Low-Degree Polynomials framework, Sum-of-Squares Hierarchy, or the Overlap Gap Property. While one can prove lower bounds in any of these frameworks, those lower bounds are meaningless as even the simplest algorithms -- such as III and IV -- do not fall in the restricted classes of algorithms (they are highly non-local and high-degree due to the Gaussian elimination step). Hence, it is not surprising that the predictions are wrong. For instance, low-degree and Sum-of-Squares predict hardness for any noise level when $\beta <k/2.$ This is an expected failure mode of the two paradigms in algebraic settings (formally, the Sum-of-Squares prediction is not known to be wrong as it predicts hardness of refutation). The interesting question that it raises is:

\begin{openproblem}
    Design a noise-sensitive hardness heuristic that will predict in what regimes Problem~\ref{problem:mainproblem} is hard without contradicting any of the algorithms I, II, III, IV. Or, show that all of the ``unknown'' regime in Figure~\ref{fig:kxor-tradeoffs} is easy in time $n^{O(k)}.$
\end{openproblem}

Assuming that the problem is not easy at least in part of the ``unknown regime'', we strongly believe that such a heuristic will be broadly useful towards other ``algebraic'' problems in high-dimensional statistics. 

Alternatively, it would be interesting to prove hardness via reductions from well-established assumptions.

\paragraph{Acknowledgments.} I am grateful to Guy Bresler, Stefan Tiegel, and Vinod Vaikuntanathan for many insightful conversations on noisy $k$-XOR and related problems.

\paragraph{AI Usage Statement.} AI was used to discover the connection to \cite{alonfeigeprobes}. AI was also used to revise the document.

\printbibliography

\end{document}

%% file: macros.tex
\usepackage[utf8]{inputenc}
\usepackage{amsmath}
\usepackage{amssymb,amsfonts,amsthm}
\usepackage{mathrsfs}
\usepackage{mathtools}
\usepackage{algorithm}
\usepackage{algorithmic}
\usepackage{xcolor}
\usepackage{tikz}
\usepackage{fullpage}
\usepackage{dsfont}
\usepackage{thmtools}
\usepackage{thm-restate}
\usepackage{enumitem}
\usepackage{graphicx}
\usepackage{sidecap}
\usepackage{bbm}
\usepackage{array}
\newcolumntype{M}[1]{>{\centering\arraybackslash}m{#1}}

\usepackage{accents}
\usepackage{lmodern}

\usepackage{microtype}      
\usepackage{libertine}
\usepackage{libertinust1math}

\usepackage{hyperref}
\usepackage{xcolor}
\definecolor{maroon}{rgb}{0.5, 0.0, 0.0}
\hypersetup{
  colorlinks   = true, 
  urlcolor     = blue, 
  linkcolor    = blue, 
  citecolor   = maroon 
}
\usepackage[capitalize, nameinlink]{cleveref}

\usepackage{tikz}
\usetikzlibrary{decorations.pathreplacing}
\usetikzlibrary{arrows,positioning}

\usepackage{comment}
\DeclareRobustCommand\shape{
 \lower5pt\hbox{
 \hskip-7pt
  \tikzset{circ/.style={circle, draw, fill=black, scale=.15}}
  \begin{tikzpicture}[semithick,scale=.3]
  \node (l1) at (0,.5) [circ]{};
  \node (l3) at (0.5,0.3) [circ]{};
  \draw[-] (l1) to node [auto] {} (l3);
    \end{tikzpicture}
  \hskip-8pt}
}

\makeatletter
\newcommand{\mylabel}[2]{#2\def\@currentlabel{#2}\label{#1}}
\makeatother

\newcommand{\unif}{\mathsf{Unif}}

\newtheorem*{theorem*}{Theorem}

\newtheorem{lemma}{Lemma}[section]
\newtheorem*{lemma*}{Lemma}

\newtheorem{problem}{Problem}[]
\newtheorem{openproblem}{Open Problem}[]

\newtheorem{fact}{Fact}[]

\DeclareMathOperator*{\E}{{\rm I}\kern-0.18em{\rm E}}

\newcommand{\prob}{\Pr}

\renewcommand{\Pr}{\,{\rm I}\kern-0.18em{\rm P}}

\definecolor{todored}{HTML}{AD0029}

\newcommand\MYcurrentlabel{xxx}
\newcommand{\MYstore}[2]{%
  \global\expandafter \def \csname MYMEMORY #1 \endcsname{#2}%
}
\newcommand{\MYload}[1]{%
  \csname MYMEMORY #1 \endcsname%
}
\newcommand{\MYnewlabel}[1]{%
  \renewcommand\MYcurrentlabel{#1}%
  \MYoldlabel{#1}%
}
\newcommand{\MYdummylabel}[1]{}
\newcommand{\torestate}[1]{%
  \let\MYoldlabel\label%
  \let\label\MYnewlabel%
  #1%
  \MYstore{\MYcurrentlabel}{#1}%
  \let\label\MYoldlabel%
}
\newcommand{\restatetheorem}[1]{%
  \let\MYoldlabel\label
  \let\label\MYdummylabel
  \begin{theorem*}[Restatement of \cref{#1}]
    \MYload{#1}
  \end{theorem*}
  \let\label\MYoldlabel
}
\newcommand{\restatelemma}[1]{%
  \let\MYoldlabel\label
  \let\label\MYdummylabel
  \begin{lemma*}[Restatement of \cref{#1}]
    \MYload{#1}
  \end{lemma*}
  \let\label\MYoldlabel
}
\newcommand{\restateprop}[1]{%
  \let\MYoldlabel\label
  \let\label\MYdummylabel
  \begin{proposition*}[Restatement of \cref{#1}]
    \MYload{#1}
  \end{proposition*}
  \let\label\MYoldlabel
}
\newcommand{\restatefact}[1]{%
  \let\MYoldlabel\label
  \let\label\MYdummylabel
  \begin{fact*}[Restatement of \cref{#1}]
    \MYload{#1}
  \end{fact*}
  \let\label\MYoldlabel
}
\newcommand{\restateparameters}[1]{%
  \let\MYoldlabel\label
  \let\label\MYdummylabel
  \begin{parameters*}[Restatement of \cref{#1}]
    \MYload{#1}
  \end{parameters*}
  \let\label\MYoldlabel
}
\newcommand{\restate}[1]{%
  \let\MYoldlabel\label
  \let\label\MYdummylabel
  \MYload{#1}
  \let\label\MYoldlabel
}

\usepackage[backend=biber,
backref=true,
style=alphabetic,
minalphanames=3,
maxalphanames=3,
maxnames=50
]{biblatex}
\newcommand{\sparserow}{\mathsf{Sp}_{n,k}}
\newcommand{\field}{\mathbb{F}_2}